%%%%%%%%%%%%%%%%%%%%%%%%%%%%%%%%%%%%%%%%%%%%%%%%%%%%%%%%%%%%%%%%%%%%%%%%%%%%%%%%
%2345678901234567890123456789012345678901234567890123456789012345678901234567890
%        1         2         3         4         5         6         7         8

\documentclass[letterpaper, 10pt, conference]{ieeeconf}  % Comment this line out if you need a4paper

\makeatletter
\def\endthebibliography{%
  \def\@noitemerr{\@latex@warning{Empty `thebibliography' environment}}%
  \endlist
}
\makeatother

\IEEEoverridecommandlockouts                              % This command is only needed if 
                                                          % you want to use the \thanks command

\overrideIEEEmargins                                      % Needed to meet printer requirements.
\usepackage[section]{placeins}
\usepackage{algorithmicx}
\usepackage{algorithm}
\usepackage{graphicx}
\usepackage{algpseudocode}
\usepackage{amsmath}
\usepackage{caption}
\usepackage{multicol}
\usepackage{amssymb}

\newtheorem{definition}{Definition}
\newtheorem{remark}{Remark}
\newtheorem{proposition}{Proposition}
\newtheorem{assumption}{Assumption}
\usepackage{subfigure}
\newcommand{\real}{\mathbb{R}}

%In case you encounter the following error:
%Error 1010 The PDF file may be corrupt (unable to open PDF file) OR
%Error 1000 An error occurred while parsing a contents stream. Unable to analyze the PDF file.
%This is a known problem with pdfLaTeX conversion filter. The file cannot be opened with acrobat reader
%Please use one of the alternatives below to circumvent this error by uncommenting one or the other
%\pdfobjcompresslevel=0
\pdfminorversion=4

% See the \addtolength command later in the file to balance the column lengths
% on the last page of the document

\title{\LARGE \bf
Self-triggered Control for Safety Critical Systems using Control Barrier Functions
}

\author{Guang Yang$^{1}$,  Calin Belta$^{2}$ and Roberto Tron$^{2}$% <-this % stops a space
\thanks{This material is based upon work partially supported by the National Science Foundation under Grant NSF IIS-1723995 and NSF CMMI-1728277. }% <-this % stops a space
\thanks{$^{1}$Guang Yang is with Division of Systems Engineering at Boston University, Boston, MA 02215 USA. Email: {\tt\small gyang101@bu.edu}}
\thanks{$^{2}$Roberto Tron,$^{2}$Calin Belta are with the Department of Mechanical Engineering at Boston University, Boston, MA 02215 USA. Email: {\tt\small tron@bu.edu, cbelta@bu.edu}}
}

\begin{document}

\maketitle

%%%%%%%%%%%%%%%%%%%%%%%%%%%%%%%%%%%%%%%%%%%%%%%%%%%%%%%%%%%%%%%%%%%%%%%%%%%%%%%%
\begin{abstract}
We propose a real-time control strategy that combines self-triggered control with Control Lyapunov Functions (CLF) and Control Barrier Functions (CBF). Similar to related works proposing CLF-CBF-based controllers, the computation of the controller is achieved by solving a Quadratic Program (QP). However, we propose a Zeroth-Order Hold (ZOH) implementation of the controller that
overcomes the main limitations of traditional approaches based on periodic controllers, i.e., unnecessary controller updates and potential violations of the safety constraints. Central to our approach is the novel notion of safe period, which enforces a strong safety guarantee for implementing ZOH control. In addition, we prove that the system does not exhibit a Zeno behavior as it approaches the desired equilibrium. 
\end{abstract}

%%%%%%%%%%%%%%%%%%%%%%%%%%%%%%%%%%%%%%%%%%%%%%%%%%%%%%%%%%%%%%%%%%%%%%%%%%%%%%%%

\section{Introduction}

Real-time control is central to many cyber-physical systems, such as autonomous cars, building automation systems, and robots. The design of a real-time controller requires the consideration of several factors, including computational resource constraints, actuator limitations, and safety. In this work, we focus on two complementary goals: stability and safety. 

An effective way to address these two objectives is to use Control Lyapunov Functions (CLF) \cite{khalil1996noninear} for stability and Control Barrier Functions (CBF) \cite{wieland2007constructive} for safety. This formalism was first used in adaptive cruise control \cite{ames2014control}. It was also adopted in other safety-critical applications, such as lane keeping in autonomous driving \cite{ames2017control}, quadrotor control\cite{wang2017safe} and control of bipedal robot walking\cite{hsu2015control}. The recently introduced notion of Exponential Control Barrier Function (ECBF) \cite{nguyen2016exponential} greatly reduced the complexity of designing CBFs for systems with higher relative degree, as compared to \cite{hsu2015control}, \cite{wu2016safety}. These works compute the desired control using simple optimization problems (typically Quadratic Programs), where the stability and safety requirements are encoded as linear constraints, even for non-linear systems. This formalism, however, is based on a continuous time formulation, which is in contradiction with the reality that these controllers are implemented on digital platforms, where the updates to the control law can happen only at discrete times. In this paper, we address the problem of implementing a continuous CLF-CBF controller on a digital platform with discrete time updates, while still satisfying the stability and safety properties. 

Traditionally, digital controllers are implemented  using 
discretized periodic control inputs. 
A popular discretization method is the Zeroth-Order Hold (ZOH), where the controller computes and holds a discrete control signal for a fixed time period. 
This approach has two potential major drawbacks when naively combined with the CLF-CBF formalism for safety-critical applications. First, given a fixed update period, there is no guarantee that the safety constraints will hold. Since the plant is sampled at a fixed frequency, the system could violate the safety constraints in between two sampled time instances. Second, there are unnecessary computations and control updates due to fixed-time sampling. This is cumbersome for a system with constrained computational resources and actuator life. 

In this paper, we propose to use self-triggered control \cite{anta2010sample} to address these issues. Self-triggered control was introduced in \cite{velasco2003self}, and related works include \cite{mazo2009input}, \cite{mazo2010iss},\cite{wang2009self} and \cite{anta2010sample}. The core of all self-triggered controllers consists of two parts. First, a designed feedback controller computes the control input at a given time instance. Second, it determines the next controller update time instance based on current sensor measurements and mission requirements. This approach, also known as proactive control, is different from event-based control, where the controller is updated when an event occurs. 

In this work, we propose a novel self-triggered controller that pre-computes the next update time instance given the current state, control objective, and safety requirements. While the controller is applied in a ZOH manner, we ensure that the system will not violate the constraints between updates. 

The remaining of the paper is organized as follows. In Section \ref{Preliminaries}, we introduce Control Barrier Functions (CBFs), Control Lyapunov Functions (CLFs), Zeroth-Order Hold (ZOH) control, the system dynamics used throughout the paper. We formulate the problem in Section \ref{ProblemFormulation}. The technical details of the solution are presented in Section \ref{TechnicalApproach}. In Section \ref{ExampleStudy}, we validate our controller on a double integrator dynamical system. In particular, we empirically compare our self-triggered control strategy with standard periodic control. We conclude and discuss future directions in Section \ref{Conclusion}.

\section{Preliminaries} \label{Preliminaries}
\subsection{Notation}
We use $\mathbb{Z}$
and $\mathbb{R}^n$ to denote the set of integers and the set of real numbers in $n$ dimensions, respectively. The Lie derivative of a smooth function $h(x(t))$ along dynamics $\dot{x}(t)=f(x(t))$ is denoted as $\pounds_{f} h(x) := \frac{\partial h(x(t))}{\partial x(t)} f(x(t))$. We use $\pounds_{f}^{r_b} h(x)$ to denote a Lie derivative of higher order $r_b$, where $r_b \geq 0$. A function $f: \mathbb{R}^n \mapsto \mathbb{R}^m$ is called \textit{Lipschitz continuous} on $\mathbb{R}^n$ if there exists a positive real constant $L \in \mathbb{R}^+$, such that $\|f(y)-f(x)\| \leq L \|y-x\|, \forall x,y \in \mathbb{R}^n$. Given a smooth function $h:\mathbb{R}^n \mapsto \mathbb{R}$, we denote $h^{r_b}$ as its $r_b$-th derivative with respect to time $t$. A continuous function $\alpha:[-b,a) \mapsto [-\infty,\infty)$, for some $a>0, b>0$, belong to extended class $K$ if $\alpha$ is strictly increasing on $\mathbb{R}^+$ and $\alpha(0)=0$. 

\subsection{Safety Constraints and Control Barrier Functions}
Consider a continuous time dynamical control system
\begin{equation}\label{eq:dynamicSystem}
\dot{x} = f(x) + g(x)u,
\end{equation} where $x \in \mathbb{R}^n$, $u \in \mathbb{R}^m$, and $f(x)$, $g(x)$ are locally Lipschitz continuous. Let $x_0:=x(t_0) \in \mathbb{R}^n$ denote the initial state. For any initial condition $x_0$, there exists a maximum time interval $\mathrm{I}(x_0) = [t_0,t_{max})$ such that $x(t), \forall t \in \mathrm{I}(x_0)$ is a unique solution. 
Next, we define a set of safety constraints. Given a continuously differentiable function $h: \mathbb{R}^n \mapsto \mathbb{R}$, we define a closed safety set $C$:
\begin{equation}\label{eq:safetySet}
\begin{aligned}
C &= \{x \in{\mathbb{R}^n}|h(x)  \geq 0 \}. \\
\partial{C} &= \{x \in \mathbb{R}^n|h(x)  = 0 \}, \\
Int(C) &= \{x \in \mathbb{R}^n|h(x)  > 0 \}.
\end{aligned}
\end{equation}
The set $C$ is called \emph{forward invariant} for system (\ref{eq:dynamicSystem})
if $x_0 \in C$ implies $x(t)\in C, \forall t \in \mathrm{I}(x_0)$. 

Given a continuously differentiable $h:\mathbb{R}^n \mapsto \mathbb{R}$ and dynamics \eqref{eq:dynamicSystem}, the relative degree $r_b \geq 0$ is defined as the smallest natural number such that $\pounds_{g} \pounds_{f}^{r_b-1} h(x) u \neq 0$. The time derivative of $h$ are related to the Lie derivatives by:

\begin{equation} \label{eq:rbh}
h^{r_b}(x) = \pounds_{f}^{r_b}h(x) + \pounds_{g}\pounds_{f}^{r_b-1}h(x)u.
\end{equation}

To ensure forward invariance for systems with higher relative degrees, \cite{nguyen2016exponential} introduced the notion of Exponential Control Barrier Function (ECBF). Before formally reviewing its definition, a transverse variable is defined as
\begin{align}
\xi_b(x) = \left[ \begin{matrix} h(x)\\ 
\dot{h}(x)\\
\cdot \\
\cdot \\
h^{r_b}(x) \end{matrix} \right],
\end{align}
together with a virtual control
\begin{equation}
\mu = (\pounds_{g}\pounds_{f}^{r_b-1}h(x))^{-1}(\mu-\pounds_{f}^{r_b}h(x)).
\end{equation}
The input-output linearized system corresponding to \eqref{eq:dynamicSystem} is
\begin{align*}
\dot{\xi}_b(x) = A_b \xi_b(x) + B_b \mu, \\
y = C_b \xi_b(x)= h(x),
\end{align*}
with 
\begin{align}
A_b = \left[\begin{matrix}0 &1 &\cdot &\cdot &0 \\
0 &0 &1 &\cdot &0\\
\cdot &\cdot &\cdot &\cdot &\cdot\\
0 &0 &0 &\cdot &1\\
0 &0 &0 &0 &0 \end{matrix}\right], B_b = \left[\begin{matrix}0\\
\cdot\\
\cdot\\
0\\
1\end{matrix}\right],\\
C_b = \left[1 \cdot \cdot \cdot 0\right].
\end{align}

\begin{definition}
(\textit{Zeroing Control Barrier Function}) Consider a dynamical system in \eqref{eq:dynamicSystem} and the closed set $C$ defined in \eqref{eq:safetySet}. Given a continuously differentiable function $h:\mathbb{R}^n \mapsto \mathbb{R}$ with relative degree $r_b = 1$. If there exits a locally Lipschitz extended class $\mathrm{K}$ function $\alpha$ and a set $C$, such that 
\begin{align}\label{eq:zcbfConstraint}
\inf \limits_{u\in U} [\pounds_{f} h(x)+&\pounds_{g}h(x)u+\alpha(h(x))] \geq 0, \quad \forall x \in Int(C),
\end{align}
then $h(x)$ is a zeroing control barrier function (ZCBF)\cite{xu2016robustness} and it implies \textit{forward invariance} of system \eqref{eq:dynamicSystem}.
\end{definition}

\begin{definition}
(\textit{Exponential Control Barrier Function}) 
Consider a dynamical system \eqref{eq:dynamicSystem}, the safety set $C$ defined in \eqref{eq:safetySet} and $h(x)$ with relative degree $r_b \geq 1$. Then $h(x)$ is an exponential control barrier function (ECBF)\cite{nguyen2016exponential} if there exists $K_b \in \mathbb{R}^{1\times r_b}$, such that
\begin{equation} \label{eq:ecbfConstraint}
\inf \limits_{u\in U} [\pounds_{f}^{r_b} h(x)+\pounds_{g}\pounds_{f}^{r_b-1} h(x)u+ K_b \xi_b(x) ] \geq 0, \forall x \in Int(C).
\end{equation}
The row vector of coefficients $K_b$ is selected such that the closed-loop matrix $A_b -B_b K_b$ has all negative real eigenvalues.

\begin{remark}\label{remark1}
As pointed out in \cite{nguyen2016exponential}, the ZCBF is a special case of ECBF with relative degree $r_b=1$.
\end{remark}

\end{definition}

\subsection{Stabilization with Control Lyapunov Function}
\begin{definition} 
(\textit{Exponentially-Stabilizing Control Lyapunov Function}) Given the system \eqref{eq:dynamicSystem}, a continuously differentiable function $V: \mathbb{R}^n \mapsto \mathbb{R}$ is an Exponentially-Stabilizing Control Lyapunov Function (ES-CLF)\cite{ames2014rapidly} if there exists positive constants $c_1,c_2,\epsilon \geq 0$, such that

\begin{equation}
\begin{aligned}\label{eq:CLF}
\centering
&c_1 \|x\|^2\leq V(x)  \leq c_2 \|x\|^2, \\
& \inf \limits_{u\in U} [\pounds_{f} V(x)+\pounds_{g}V(x)u+\epsilon V(x)] \leq 0, \quad \forall x \in \mathbb{R}^n.
\end{aligned}
\end{equation}
\end{definition}
The existence of a ES-CLF implies that there exists a set of controllers 
\begin{equation*}
K_{ES-CLF} = \{u\in U: \pounds_{f} V(x)+\pounds_{g}V(x)u+\epsilon V(x)] \leq 0\},
\end{equation*}
such that the system is exponentially stabilized \cite{ames2014rapidly}, i.e.
\begin{equation}
x(t) \leq \sqrt[]{\frac{c_2}{c_1}}e^{-\frac{\epsilon}{2}t}\|x_0\|, \quad \forall t \geq 0
\end{equation}

\subsection{Zeroth-Order Hold}
The Zeroth-Order Hold (ZOH) control  mechanism holds the control signal at $t_k$ over a period of time, i.e. $u(s) = u(t_k), \forall s \in [t_k,t_{k+1})$. The sequence of control update time instants $\{t_k\}_{k \in \mathbb{N}}$ is strictly increasing. 

\section{Problem formulation}\label{ProblemFormulation}
Let the continuous dynamical system defined in \eqref{eq:dynamicSystem} with an initial state $x_0 \in Int(C)$. The goal is to stabilize the system to a desired state $x_{d} \in \mathbb{R}^n$ under discretized control input while guaranteeing \textit{forward invariance} of the safety set defined in \eqref{eq:safetySet}.
We propose self-triggered controller that uses a Quadratic Program (QP) to compute the control signal, and that actively computes the next update instance given the safety constraints and control objective. In particular, we introduce the notions of a safe periods for the safety constraints ($\tau_{\mathrm{CBF}}$) and for the stability constraints ($\tau_{\mathrm{CLF}}$). These safe periods are computed by means of a lower bound on the ECBF constraints, upper bounds on the CLF, and bounds on the trajectories of the system (i.e., we do not require an explicit integration of the dynamics \eqref{eq:dynamicSystem}.

%\begin{remark} \label{remakr2}
%Based on \eqref{eq:dynamicSystem}, it might be difficult to find a closed-form solution of $x(t), \forall t\in \mathrm{I(x_0)}$. Therefore, we will employ a bound on the system trajectories that is only time dependent to evaluate the safety properties of the system. 
%\end{remark}

\section{Self-triggered Control using CBF}\label{TechnicalApproach}
In this section, we define the CLF-CBF QP for our controller. Next, the notion of safe periods for the CBF and CLF constraints is introduced. Lastly, we present the complete controller update strategy. 
\subsection{CBF-CLF Quadratic Program formulation}
Given \eqref{eq:dynamicSystem},  the CBF-CLF QP is defined as
\begin{equation}
\begin{aligned}
& \underset{u\in \mathrm{U}}{\text{min}}
& & u^Tu \\
& \text{s.t.}
& & \pounds_{f}^{r_b} h(x)+\pounds_{g}\pounds_{f}^{r_b-1} h(x)u+ K_b \xi_b \geq 0,\\
& & &\pounds_{f} V(x)+\pounds_{g}V(x)u+\epsilon V(x) \leq 0,\\
& & &x(t_k)\in Int(C),\\
%& & & u_{l} \geq u \geq u_{u}.
\end{aligned}
\label{eq:ECBF-ZCBF-CLF}
\end{equation}
The control input is constrained to be in a convex set $U$, which can be used to model practical actuation limits (e.g., for $u\in\real{}$, we might have lower and upper bounds $u_{l}$ and  $u_{u}$, respectively).

At every update instance $t_k$,  we solve \eqref{eq:ECBF-ZCBF-CLF} to compute the optimal control input $u_k$. This control is applied in a ZOH manner until the next update instance $t_{k+1}$. 
At a high level, the strategy used by our self-triggered controller is to evaluate whether, with $u_k$ applied in a ZOH manner, the ECBF constraint in \eqref{eq:ECBF-ZCBF-CLF} will still hold in the interval $t_{k+1} \geq t\geq t_k$, and whether the CLF will decrease after at the end of the same period.

\subsection{Distance bound on a system trajectory}
For the computation of the safe periods for the ECBF constraints, we rely on bounds for the inequalities in \eqref{eq:ECBF-ZCBF-CLF}. Since these inequalities are state-dependent, we need a simple way to describe the trajectory of the system \eqref{eq:dynamicSystem} (since, in general, an exact integration of the dynamics might be computationally infeasible for a real-time controller). More specifically, we propose to find a bound on the system trajectory that exclusively depends on general properties of the system dynamics. Since we evaluate the trajectory bound at every $t_k$, we denote $r_{t_k}(t) = r(t+t_k), \forall t \geq t_k$. The upper bound of $r_{t_k}$ is defined as $\overline{r}_{t_k}$.

\begin{proposition}\label{proposition}
Given the dynamical system defined in \eqref{eq:dynamicSystem}, starting at $x(t_k)$ the distance between the trajectory $x(t+t_k)$ and $x(t_k)$ is bounded by $\overline{r}_{t_k}(t) = r_{0}e^{L(t-t_k)} - \frac{1}{L}\|f(x(t_k))+g(x(t_k))u_k\|, \forall t \geq t_k$. 
\end{proposition}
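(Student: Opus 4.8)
The plan is to reduce the claim to a scalar differential inequality for the distance $r(t) := \|x(t) - x(t_k)\|$ along the ZOH-controlled trajectory, and then to dominate $r(t)$ from above by the Comparison Lemma. First I would observe that over the interval in question the input is frozen at $u_k$, so the closed-loop vector field is the autonomous map $F(x) := f(x) + g(x)u_k$, which is (locally) Lipschitz because $f$ and $g$ are; let $L$ be a Lipschitz constant of $F$ valid on the region the trajectory occupies. Note that by construction $r(t_k) = 0$.

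Next I would differentiate the distance. Because $r(\cdot)$ fails to be differentiable where it vanishes, I would work with the upper right Dini derivative and use the standard estimate $D^{+} r(t) \le \|\dot{x}(t)\| = \|F(x(t))\|$. Adding and subtracting $F(x(t_k))$ and invoking the Lipschitz bound gives
\[
\|F(x(t))\| \le \|F(x(t)) - F(x(t_k))\| + \|F(x(t_k))\| \le L\, r(t) + \|f(x(t_k)) + g(x(t_k))u_k\|.
\]
Writing $b := \|f(x(t_k)) + g(x(t_k))u_k\|$, this is the linear differential inequality $D^{+} r(t) \le L\, r(t) + b$ with $r(t_k) = 0$.

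I would then invoke the Comparison Lemma: $r(t)$ is bounded above by the solution $z(t)$ of the companion ODE $\dot{z} = L z + b$, $z(t_k) = 0$, whose closed form is $z(t) = \tfrac{b}{L}\big(e^{L(t-t_k)} - 1\big)$. Setting $r_0 := b/L = \tfrac{1}{L}\|f(x(t_k)) + g(x(t_k))u_k\|$ reproduces exactly the asserted bound $\overline{r}_{t_k}(t) = r_0\, e^{L(t-t_k)} - \tfrac{1}{L}\|f(x(t_k)) + g(x(t_k))u_k\|$, and indeed $r(t) \le \overline{r}_{t_k}(t)$ for all $t \ge t_k$, with equality at $t = t_k$ confirming consistency.

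The main obstacle I anticipate is the Lipschitz constant: since $f$ and $g$ are only assumed locally Lipschitz, $L$ a priori depends on the region the trajectory visits, whereas the bound is claimed for all $t \ge t_k$. A clean treatment would either restrict attention to a compact (forward-invariant) region on which a uniform $L$ exists, or argue that within the horizon of interest the trajectory cannot leave the ball on which $L$ is defined. A secondary, more routine, point is the rigorous justification of the Dini-derivative inequality $D^{+}\|x(t) - x(t_k)\| \le \|\dot{x}(t)\|$ at instants where $x(t) = x(t_k)$.
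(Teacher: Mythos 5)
Your proposal is correct and follows essentially the same route as the paper: bound the derivative of $r(t)=\|x(t)-x(t_k)\|$ by $\|f(x(t))+g(x(t))u_k\|$, add and subtract the value at $t_k$, apply the Lipschitz bound to obtain $\dot r \le Lr + b$, and conclude via the comparison lemma with the same explicit solution. Your two caveats (using the Dini derivative where $r$ vanishes, and the fact that a locally Lipschitz $f,g$ only yields a region-dependent $L$) are genuine refinements that the paper glosses over, but they do not change the substance of the argument.
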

\begin{proof}
Let $r_{t_k}(t) = \|x(t_k+t) - x(t_k)\|$. Its derivative $\dot{r}(t)$ can be calculated as
\begin{eqnarray*}
\dot{r}(x(t+t_k)) &=& \frac{(x(t+t_k)-x(t_k))^T}{\|x(t+t_k)-x(t_k)\|} \dot{x}(t+t_k) \\
&=&  \frac{(x(t+t_k)-x(t_k))^T}{\|x(t+t_k)-x(t_k)\|} {f}(x(t+t_k),u). 
\end{eqnarray*}
Since $\frac{(x(t+t_k)-x(t_k))}{\|x(t+t_k)-x(t_k)\|}$ is a unit vector, we have
\begin{equation}
\begin{aligned}\label{eq:rboundNobar}
&\dot{r}_{t_k} \leq \|{f}(x(t+t_k),u)\| \\
&\leq \| {f}(x(t+t_k),u) - {f}(x(t_k),u) + {f}(x(t_k),u)\| \\
&\leq \| f(x(t+t_k),u) - f(x(t_k),u)\|+ \|f(x(t_k),u)\|,
\end{aligned}
\end{equation}
where we used the triangular inequality. Because of the assumption on the Lipschitz continuity of the system dynamics, the following condition holds $\|  f(x(t+t_k),u) - f(x(t_k),u)\| \leq L \|x(t+t_k) - x(t_k)\|$, with $L$ as the Lipschitz constant for $f$. By plugging the inequality into \eqref{eq:rboundNobar}, we get 
\begin{align}\nonumber
\overline{\dot{r}}_{t_k}(t) &= L\|x(t+t_k) - x(t_k)\| + \|  f(x(t_k),u)\| \\ \label{eq:rBound}
&\leq L\overline{r}(t+t_k) + \|  f(x(t_k),u)\|.
\end{align}
In this case, $ \| f(x(t_k),u)\| = \|f(x(t_k))+g(x(t_k))u_k\|$. The solution for \eqref{eq:rBound} is 
\begin{equation}\label{eq:trajBound}
\overline{r}_{t_k}(t) = r_{0}e^{L(t-t_k)} - \frac{1}{L}\|f(x(t_k))+g(x(t_k))u_k\|.
\end{equation}
The constant $r_{0}$ is determined by the condition $\overline{r}_{t_k}(0)=r_{t_k}(0)$, that is
\begin{equation*}
r_0 =  \frac{1}{L}\|f(x(t_k))+g(x(t_k))u_k\|.
\end{equation*}
We then have $r_{t_k}< \overline{r}(t_k)$, thanks to the comparison theorem. 
\end{proof}
Once we have $\overline{r}_{t_k}(t)$, we can define a ball that bounds the trajectory under system dynamics \eqref{eq:dynamicSystem} as
\begin{equation*}
B_{\overline{r}_{t_k}} = \{x\in \mathbb{R}^n:\| x(t) - x(t_k)\| \leq \overline{r}_{t_k}\}.
\end{equation*}

\subsection{CBF Safe Period}
\begin{definition}
(\textit{Safe Period}) 
For a dynamical system in \eqref{eq:dynamicSystem}, starting at $x(t_k) \in Int(C)$, if there exists a $\tau_{\mathrm{CBF}}$ such that $x(t_k+\tau_{\mathrm{CBF}}) \in Int(C)$ under a constant control input $u_k$, then $[t_k,t_k+\tau_{\mathrm{CBF}}]$ is the safe time window for the system at $t_k$, and $\tau_{\mathrm{CBF}}$ is the safe period of this system.
\end{definition}

Based on \eqref{eq:ecbfConstraint}, we define the ECBF constraint as
\begin{equation}\label{eq:zetaEFunc}
\zeta_{\mathrm{ECBF}}(x(t)) = \pounds_{f}^{r_b}h(x)+\pounds_{g}\pounds_{f}^{r_b-1} h(x)u+ K_b \xi_b(x)
\end{equation}
for $x\in Int(C).$
Based on \eqref{eq:ecbfConstraint}, the system is \textit{forward invariant} if and only if $\mathrm{\zeta_{\mathrm{ECBF}}}(x(t)) \geq 0 $. We can determine Safe Period $\tau_{\mathrm{CBF}}$ by evaluating the inequality above. 

Next, we use $\overline{r}_{t_k}(t)$ to obtain lower bound $\underline{\zeta}_{\mathrm{ECBF}}(x(t))$ so we do not rely on the closed-form solution of $x(t)$ to evaluate the safety of the system \eqref{eq:dynamicSystem}; In other words, we rely on the implication
\begin{equation*}
\underline{\zeta}_{\mathrm{ECBF}}(t) \geq 0 \implies \zeta_{\mathrm{ECBF}}(x(t)) \geq 0, \forall t_{k+1}\geq t \geq t_k,
\end{equation*}
At an update instance $t_k$, we define the initial condition $\underline{\zeta}_{\mathrm{ECBF}}(t_k) = \zeta_{\mathrm{ECBF}}(x(t_k))$. For a shorter notation, we define $\zeta(x(t)) := \zeta_{\mathrm{ECBF}}(x(t))$. Then $\underline{\zeta}(t)$ can be obtained again by using the comparison theorem with the following:
\begin{equation} \label{eq:zetaBound}
\underline{\zeta}(t) = \underline{\dot{\zeta}}(t)t+\zeta(t_k), \quad
\end{equation}
where $\underline{\dot{\zeta}}(t) \leq \dot{\zeta}(t), \forall t_{k+1} \geq t \geq t_k$. To find $\underline{\dot{\zeta}}(t)$, we first denote the derivative of $\zeta(x(t))$ as
\begin{equation*}
    \dot{\zeta}(x(t)) = \frac{\partial \zeta(x(t))}{\partial x}(f(x(t))+g(x(t))u)
\end{equation*}
After factoring out each term i.e., $\frac{\partial \zeta(x(t))}{\partial x}f(x(t))$ and  $\frac{\partial \zeta(x(t))}{\partial x}g(x(t))u$, we will get an expression in terms of state $x(t)$ and control $u$. Since control $u$ is constant under ZOH, we only need to consider the bound on the state. By using proposition \eqref{proposition}, we can use $\overline{r}_{t_k}(t)$ to bound the state and get $\underline{\zeta}(t)$.
\begin{remark}
Notice $\underline{\zeta}(t)$ is time dependent because we replace state $x(t)$ with $r(t)$ in our original safety constraint $\zeta(x(t))$. We do not need to calculate a closed-form solution from \eqref{eq:dynamicSystem} to evaluate the safety constraint.
\end{remark}

With lower bound $\underline{\zeta}(t)$, we can determine safe period $\tau_{\mathrm{CBF}}$, such that $\underline{\zeta} (t_k+\tau_{\mathrm{CBF}})=0$. The problem is equivalent to finding a root for $\underline{\zeta}$.If the closed-form solution of \eqref{eq:zetaBound} in terms of $t$ is difficult to obtain, we can use algorithms such as secant method \cite{brent2013algorithms} to find its roots. If there are multiple CBF constraints, we denote $i$-th constraint to be $\zeta_i$. The safe period that satisfies all CBF constraints is 
\begin{equation} \label{eq:globalSafePeriod}
\tau_{\mathrm{CBF}} = min(\tau_{\mathrm{CBF},i}), \forall i.
\end{equation}

\subsection{CLF Update Period}
In addition to the safety constraints \eqref{eq:zetaEFunc}, the CLF constraint also needs to be considered for determining the next update time $t_{k+1}$. There could be cases where the system violates the stability constraint while using the ZOH control $u_k$ for $\tau_{CBF}$. Intuitively, the resulting trajectory might overshoot the equilibrium if we naively apply the following update rule $t_{k+1}:=t_k+\tau_{CBF}$. 

Because the QP formulation is solved point-wise in time, we cannot guarantee the property of exponential convergence to the desired state on a ZOH implementation. To achieve at least asymptotic stability, we need to define a CLF update period which guarantees that the Lyapunov function decreases at every step. 

\begin{definition}\label{def:CLFupdatePeriod}
(\textit{CLF Update Period}) 
For the dynamical system defined in \eqref{eq:dynamicSystem}, the $\tau_{\mathrm{clf}}$ is a CLF update period, if $ V(x(t_k+\tau_{\mathrm{CLF}})) - V(x(t_k)) \leq 0$. 
\end{definition}

For systems that do not have a closed-form solution for their trajectories, we need to find a upper bound $\overline{V}(t)$ such that $\overline{V}(t) \geq V(x(t)), \forall t_{k+1} \geq t\geq t_k$,

\begin{equation}
\overline{V}(t) \leq 0 \implies V(x(t)) \leq 0, \forall t_{k+1} \geq t \geq t_k.
\end{equation}
% Therefore, $\tau_{\mathrm{CLF}}$ is a CLF update period if $\overline{\eta}(t_k+\tau_{\mathrm{CLF}})=0, \forall t_k \in \mathrm{I(x_0)}$. 

The upper bound for $V(x(t))$ can be found using descent lemma \cite{bertsekas1999nonlinear}. The following inequality holds $\forall t_{k+1} \geq t\geq t_k$
\begin{align}\label{eq:vBar}
V(x(t)) \leq V(t_k)+(t-t_k)V'(t_k)+(t-t_k)^2\frac{D}{2}=\overline{V}(t).
\end{align}
where $D := \max_{x \in Int(C)} V''$. The proof is in \cite{bertsekas1999nonlinear}. 

\begin{remark}
We can get sharper bounds on D by maximizing the second derivative on $C \cap\{x:V(x)<V(x(t_k))\}$.
\end{remark}

\begin{remark}
We use different bounds for computing $\tau_{CBF}$ and those used for $\tau_{CLF}$ because the constraints would likely start with a zero margin i.e.,$\eta(x(t_k))$ near the equilibrium. This implies $\tau_{CLF}=0$.
\end{remark}

Since $\overline{V}(t)$ is a quadratic function in terms of $t$, there exits a closed-form solution for the roots. Given the condition that we want to enforce when determining $\tau_{CLF}$ shown as below
\begin{equation*}
    \overline{V}(t) - V(x(t_k)) \leq 0.
\end{equation*}
The non-zero root can be expressed as
\begin{equation}\label{eq:tau_update_clf}
    \tau_{CLF} = \frac{-2V'(x(t_k))}{D}.
\end{equation}

\begin{assumption}\label{assumption1}
By using the following inequality constraint $V'(x(t)) \leq -\epsilon V(x(t))$ defined in \eqref{eq:ECBF-ZCBF-CLF}, we assume there is a neighborhood of equilibrium such that for the optimal solution from solving the QP, the inequality shown above becomes equality. Note we expect that this assumption is valid given the nature of QP i.e., satisfying the CLF constraint while minimizing control effort. As the system approaches to equilibrium, the Lyapunov Function $V(x)$ decreases toward zero and optimal control input $u$ will also converge to zero so the CLF constraint will be minimally satisfied. 
\end{assumption}

\begin{proposition} \label{proposition2}
Given a continous time system \eqref{eq:dynamicSystem}, there exists a constant $\tau_{CLF} > 0$ as $x_1(t) \rightarrow x_{1,d}$ and $x_2(t) \rightarrow 0$, $\forall t \geq t_k$. 
\end{proposition}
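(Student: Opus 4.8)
The plan is to establish the non-Zeno property by showing that the update period \eqref{eq:tau_update_clf} does not shrink to zero as the state approaches the equilibrium $(x_{1,d},0)$. I would start from $\tau_{CLF} = \frac{-2V'(x(t_k))}{D}$ and apply Assumption \ref{assumption1}: in a neighborhood of the equilibrium the CLF constraint is active, so $V'(x(t_k)) = -\epsilon V(x(t_k))$ and therefore $\tau_{CLF} = \frac{2\epsilon V(x(t_k))}{D}$. Mere positivity of $\tau_{CLF}$ away from equilibrium is immediate from $V>0$ and $D>0$; the substance of the claim is that the ratio $V(x(t_k))/D$ stays bounded away from zero in the limit $x_1 \to x_{1,d}$, $x_2 \to 0$, since otherwise the update intervals would collapse and admit Zeno behavior.

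The key step is to show that the numerator and the denominator vanish at a common rate. Here I would use the sharper choice of $D$ from the remark following \eqref{eq:vBar}, taking the maximum of $V''$ over the shrinking sublevel set $C \cap \{x : V(x) < V(x(t_k))\}$ rather than over all of $Int(C)$. For the double integrator $\dot{x}_1 = x_2$, $\dot{x}_2 = u$ with a quadratic $V = \frac{1}{2}(\tilde{x}_1^2 + x_2^2)$ and $\tilde{x}_1 := x_1 - x_{1,d}$, differentiating twice under the ZOH (constant $u_k$) gives $V'' = x_2^2 + \tilde{x}_1 u_k + u_k^2$, a quadratic form in $(\tilde{x}_1, x_2, u_k)$. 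Since $V = \frac{1}{2}\|(\tilde{x}_1, x_2)\|^2$ and, by the minimum-norm structure of the QP together with Assumption \ref{assumption1}, the optimal $u_k$ vanishes at the same linear rate as $\|(\tilde{x}_1, x_2)\|$, both $V(x(t_k))$ and the sublevel-set maximum $D$ scale like $\|(\tilde{x}_1, x_2)\|^2$. Consequently $V(x(t_k))/D \to 1/c$ for some positive constant $c$, and $\tau_{CLF} \to 2\epsilon/c > 0$.

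I expect the main obstacle to be making the scaling of the optimal control rigorous, namely that $u_k \to 0$ linearly in $\|(\tilde{x}_1, x_2)\|$. The difficulty is that $u$ influences $\dot{V}$ only through the product $x_2 u$, so along the axis $x_2 = 0$ the control has no authority over $V'$, and a naive inversion of $V'(x(t_k)) = -\epsilon V(x(t_k))$ would suggest an unbounded $u_k$. I would therefore need to verify, consistently with Assumption \ref{assumption1}, that the QP keeps $u_k$ bounded and linearly vanishing on the approach to equilibrium --- for instance by reasoning along the actual closed-loop trajectory, where $\tilde{x}_1$ and $x_2$ decay together rather than independently, so that the singular direction is never approached in isolation. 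Once the common quadratic scaling of $V(x(t_k))$ and $D$ is secured, the conclusion that $\tau_{CLF}$ is bounded below by a positive constant follows at once, and with it the fact that only finitely many updates occur in any finite time, ruling out Zeno behavior.
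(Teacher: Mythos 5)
Your proposal takes essentially the same route as the paper: use Assumption \ref{assumption1} to replace $-V'(x(t_k))$ with $\epsilon V(x(t_k))$ in \eqref{eq:tau_update_clf}, arriving at $\tau_{CLF}=2\epsilon V(x(t_k))/D$ as in \eqref{eq:tau_clf_alternative}, and then argue that numerator and denominator vanish at a common quadratic rate because the optimal control from \eqref{eq:analyticalU} scales linearly with the state error. The obstacle you flag --- that $u^*$ carries $\pounds_{g}V(x(t_k))$ in its denominator, which itself tends to zero at the equilibrium, so the bounded, linearly vanishing behavior of $u_k$ is not automatic --- is precisely the step the paper's own proof leaves unjustified (it simply asserts that numerator and denominator ``have the same order in terms of $V(x(t))$''), so your account reproduces the paper's argument and correctly locates its weakest link.
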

\begin{proof}
We need to show the limit of $\tau_{CLF}$ becomes a constant as the system approaches to the desired state, i.e.
\begin{equation}\label{eq:tau_limit}
    \lim_{x_1\to x_{1,d}, x_2\to 0} \tau_{CLF}
\end{equation}
Given the Assumption \ref{assumption1}, we can substitute the numerator term in \eqref{eq:tau_update_clf} and get 
\begin{equation}\label{eq:tau_clf_alternative}
    \tau_{CLF} = \frac{2\epsilon V(x(t_k))}{D} = \frac{2\epsilon V(x(t_k))}{\max_{x \in Int(C)} V''(x(t))}.
\end{equation}
In addition, there exits a closed-form solution for control $u$ with respect \eqref{eq:ECBF-ZCBF-CLF}. Given the equality assumption,  we can analytically determine the control input as
\begin{equation}\label{eq:analyticalU}
    u^* = \frac{-\epsilon V(x(t))-\pounds_{f} V(x(t_k))}{\pounds_{g} V(x(t_k))}.
\end{equation}
Since $V''(x(t))$ depends on both state $x(t)$ and control $u$. By using the closed of optimal control input \eqref{eq:analyticalU}, the numerator and denominator of \eqref{eq:tau_clf_alternative} have the same order in terms of $V(x(t))$. Therefore, $\tau_{CLF}$ becomes a constant as the system approach to desired equilibrium. 
\end{proof}
\vspace*{5mm}
The self-triggered control algorithm is summarized as
\begin{algorithm}[H]
\caption{Self-Triggered Control with CBF Constraints}
\begin{algorithmic}[1]
\Procedure {selfTriggered}{$x_0$,$K_b$,$h(x)$}
\State $x(t_k)$ := $x_0$, $\quad \forall x_0 \in Int(C)$
\While {$x(t_k) \notin Goal$}
\State Calculate optimal $u_k$ by solving \eqref{eq:ECBF-ZCBF-CLF}
\State Calculate Safe Period $\tau_{\mathrm{CBF}}$ defined in \eqref{eq:globalSafePeriod}
\State Calculate CLF Update Period $\tau_{\mathrm{CLF}}$ defined in \ref{def:CLFupdatePeriod}
\State $t_{k+1} := t_k + \min (\tau_{\mathrm{CBF}},\tau_{\mathrm{CLF}})$ 
\State For Dynamical System \eqref{eq:dynamicSystem}, hold $u_k$ between $[t_k,t_{k+1}]$
\EndWhile
\EndProcedure
\end{algorithmic}
\end{algorithm}

\section{Application to a second order integrator} \label{ExampleStudy}
In this section, we concretely apply the previous theory to the case of a simple second order integrator dynamics. Let us define the system to be
\begin{align} \label{eq:doubleIntSys}
\left[\begin{matrix}\dot{x_1}(t) \\ \dot{x_2}(t) \end{matrix} \right]= \left[ \begin{matrix} 0&1\\ 0&0 \end{matrix} \right] \left[\begin{matrix} x_1(t) \\ x_2(t) \end{matrix} \right]+ \left[ \begin{matrix} 0\\ 1 \end{matrix} \right] u.
\end{align}

Given the dynamic system \eqref{eq:doubleIntSys}, we define ECBF $h_1(x), h_2(x)$, $h_3(x), h_4(x)$, as the safety constraints:
\begin{equation}\label{eq:7}
\textbf{h}(x) = \left[\begin{matrix} h_1(x(t)) \\h_2(x(t)) \\h_3(x(t)) \\h_4(x(t))\end{matrix} \right] = \left[\begin{matrix} x_1(t)-x_{1,min} \\-x_1(t)+x_{1,max} \\x_2(t)-x_{2,min} \\-x_2(t)+x_{2,max}\end{matrix} \right],
\end{equation}
where $x_{1,min},x_{1,max},x_{2,min}, x_{2,max}$ are constants. The goal is to stabilize our system to a desired state $[x_{1,d},x_{2,d}]^T$, while still remain forward invariance, i.e. $\textbf{h}(x(t)) \geq 0,\forall t \geq t_0$. 

\subsection{CBF-CLF formulation}
Given the dynamic system $\eqref{eq:doubleIntSys}$ and set $\alpha{(h(x))} = k h(x)$, where $k$ is a relaxation constant. We have the following constraints:

\begin{equation} \label{eq:doubleIntSafetyConstraint}
\begin{aligned}
\zeta_1 &= u + k_1 x_2 + k_2 (x_1 - x_{1,min}),\\
\zeta_2 &= -u + k_1(-x_2) + k_2 (-x_1 + x_{1,max}),\\
\zeta_3 &= u + k (x_2 -x_{2,min}),\\
\zeta_4 &= -u + k (-x_2 + x_{2,max}).
\end{aligned}
\end{equation}

If $\zeta_i\geq 0, i=1,...,4$ holds, then our system is forward invariant. We define the control objective to be $x_{1,d} = 5$ and $x_{2,d} = 0$. The Lypaunov Function candidate for this particular example is
\begin{align}\label{eq:lyapunovExample}
V(x) = \left[ \begin{matrix} x_1-x_{1,d}\\ x_2 \end{matrix} \right]^T\left[\begin{matrix} 1&0.5\\0.5&1 \end{matrix}\right] \left[\begin{matrix} x_1-x_{1,d}\\ x_2 \end{matrix} \right].
\end{align}
Given \eqref{eq:CLF}, we define the CLF constraint to be
\begin{equation} \label{eq:eclf}
\eta(x) = [2x_2+(x_1-x_{1,d})]u+x_2(2(x_1-x_{1,d})+x_2)+\epsilon V.
\end{equation}
The QP formulation for system \eqref{eq:doubleIntSys} is 
\begin{equation} 
\begin{aligned}
& \underset{u\in \mathrm{U}}{\text{min}}
& & u^Tu\\
& \text{s.t.}
& & \zeta_i \geq 0, i = 1,...,4\\
& & &\eta \leq 0\\
& & &x(t_k)\in Int(C)\\
& & & u_{l} \leq u \leq u_{u}.
\end{aligned}
\label{eq:CLF-ECBF-ZCBF}
\end{equation}

\subsection{Computation of the CBF safe period}
To obtain the lower bounds for $\zeta_i$, we first calculate the derivatives $\dot{\zeta_i}$:
$\dot{\zeta_1} = k_1 x_2+k_2 u_k, \dot{\zeta_2} = -k_1 x_2-k_2 u_k, \dot{\zeta_3} = u_k$ and $\dot{\zeta_4} = -u_k$. Given the trajectory bound $r_{t_k}(t)$, we can obtain derivative bounds $\underline{\dot{\zeta_i}}$ for $\dot{\zeta_i}$. The resulting CBF constraint bounds are shown as the following:
\begin{eqnarray*}
&\underline{\zeta_1}&= (k_1(x_2(t_k)-r_{t_k}(t))-k_2\|u_k\|)t+\zeta_1(t_k), \\
&\underline{\zeta_2}&= (-k_1 (x_2(t_k)+r_{t_k}(t))-k_2 \|u_k\|)t+\zeta_2(t_k), \\
&\underline{\zeta_3}&= - k \|u_k\|t+\zeta_3(t_k),\\
&\underline{\zeta_4}&= - k \|u_k\|t+\zeta_4(t_k).
\end{eqnarray*}
\begin{remark}
Note $\underline{\zeta_i}, i=1,...,4$ do not depend on $x(t), \forall t > t_k$. We can therefore obtain safe period $\tau_i$ by directly finding the roots of $\underline{\zeta_i}$, i.e. $\underline{\zeta_i} (t_k + \tau_i)= 0$. 
\end{remark}

\subsection{Computation of CLF update period}
For an update instance $t_k$, the $\overline{V}(t)$ is obtained from Taylor-expansion at $t_k$. Given $V(x(t))$ defined in \eqref{eq:lyapunovExample},with $x_2 := x_2(t_k), x_1:=x_1(t_k)$, its first order derivative is
\begin{align*}
V'(x(t_k))&=2x_2(x_1-x_{1,d})+x_2^2+((x_1-x_{1,d})+2x_2)u_k
\end{align*}
Moreover, to find an appropriate value $D$, we obtain the second derivative as
\begin{equation*}
V''(x(t_k)) = 2x_2^2+2u_{k}(x_1-x_{1,d})+3x_2u_k+2u_k^2,
\end{equation*}
where control input $u_k$ and desired states $x_{1,d}$ are constants. 

\begin{remark} \label{VinequalityRemark}
Given the candidate Lyapunov Function \eqref{eq:lyapunovExample}, the following inequality holds
\begin{align*}
    \|x_1(t)-x_{1,d}\| \leq \sqrt{V(x(t))}, \\
    \|x_2(t)\| \leq \sqrt{V(x(t))}.
\end{align*}
\end{remark}
\vspace*{5mm}
To find the maximum value of $V''(x(t))$, we need to use Remark \ref{VinequalityRemark}. With $x_{t_k} := [x_1(t_k),x_2(t_k)]^T$, the $D$ is chosen as
\begin{align}\label{eq:Dmax}
\nonumber
D &= \max V''(x(t)) \\ &=2V(x_{t_k})+2|u_k|\sqrt{V(x_{t_k})}+3|\sqrt{V(x_{t_k})}||u_k|+2|u_k|^2
\end{align}

Next, we would like to show $\tau_{CLF}$ is finite as the system approach to the equilibrium. In practice, we do not want the controller to update infinitely fast as we approach to the desired state. After combine the result \eqref{eq:tau_clf_alternative} and substitute $u^*$ in \eqref{eq:tau_limit}, we get 
\begin{equation}
    \lim_{x_1 \to x_{1,d}, x_2\to 0} \frac{2\epsilon V(x(t))}{2V(x(t))+5\sqrt{V(x(t))}|u^*|+|u^*|^2},
\end{equation}
where the numerator and denominator have the same rate of converging as $x_1 \to x_{1,d}$ and $x_2 \to 0$. Therefore, $\lim \tau_{CLF}$ is a constant when the system approaches to equilibrium.

Now we can obtain $\overline{V}(t)$ that is defined in \eqref{eq:vBar}, where $D$ is calculated using \eqref{eq:Dmax} at each update instance. The CLF update period is 
\begin{equation}
    \tau_{CLF} = \frac{-2(2x_2(x_1-x_{1,d})+x_2^2+((x_1-x_{1,d})+2x_2)u_k)}{D}.
\end{equation}

\subsection{Simulation}
Given the double integrator system \eqref{eq:doubleIntSys} and an initial state $x_0$, the objective is to reach $x_{1,d} = -7$, $x_{2,d} = 0$. The two approaches: self-triggered and periodic controls, are both used for comparison. We define self-triggered control updating interval as $t_s$ and periodic control updating interval as $t_p$. At each controller update instance $t_k$, the CBF-CLF Quadratic Program \eqref{eq:CLF-ECBF-ZCBF} is solved using quadprog() function in Matlab 2018a with Core i5-8259U CPU. The elapsed time for solving each QP problem is around 0.0019s, which is much smaller than the required update interval. In Table~\ref{table}, the experiment parameters are defined. 

\begin{table}[ht]
\centering
\caption{CLF-CBF Controller Parameters}\label{table}
\begin{center}
\scalebox{0.9}{
\begin{tabular}{|c|c|}
\hline
\textbf{Parameter}&\textbf{Value} \\
\hline
$x_0$&$[6, 5]^T$ \\
\hline
$x_{1,min}$&$-10$ \\
\hline
$x_{1,max}$&$10$ \\
\hline
$x_{2,min}$&$-10$ \\
\hline
$x_{2,max}$&$10$ \\
\hline
$\epsilon$&0.8 \\
\hline
$L$& 1\\
\hline
$K_b$&[105 20.5] \\
\hline
$[u_l, u_u]$&[-20, 20] \\
\hline
$t_p$ & 0.75 \\
\hline
$t_s$ & $\min(\tau_{\mathrm{CBF}},\tau_{\mathrm{CLF}})$ \\
\hline
\end{tabular}}
\label{tab1}
\end{center}
\end{table}

The result is illustrated in figure \ref{fig:selfTriggered} and \ref{fig:periodic}. For self-triggered control, it is clear that the controller only updates when the system is about to violate CBF constraints or the system is deviating away from the desired states. Notice that, the update interval for self-triggered controller becomes a lot faster as the system approaches to the unsafe region ($x_1 < x_{1,\min}$) in order to prevent violation on safety constraint. Moreover, the CLF update period converges to 0.3166s and remains as a constant as system approaches to equilibrium, which validate the proposition \eqref{proposition2}. In the periodic controller case, the position $x_1$ violates $x_{1,min}$ constraint for $t \in [3,4]$. (See Figure \ref{fig:trajComparison} for a different perspective). It clearly demonstrates the issues with the periodic controller in real-life situation i.e., the controller neither knows the correct sampling rate in-advance, nor has the ability to adjust it real-time. All the safety and convergence properties from CBF and CLF formulation could fail when applying the controller in this manner.

\begin{figure}[ht]
    \centering
    \subfigure[]{
    \includegraphics[width=0.3\textwidth]{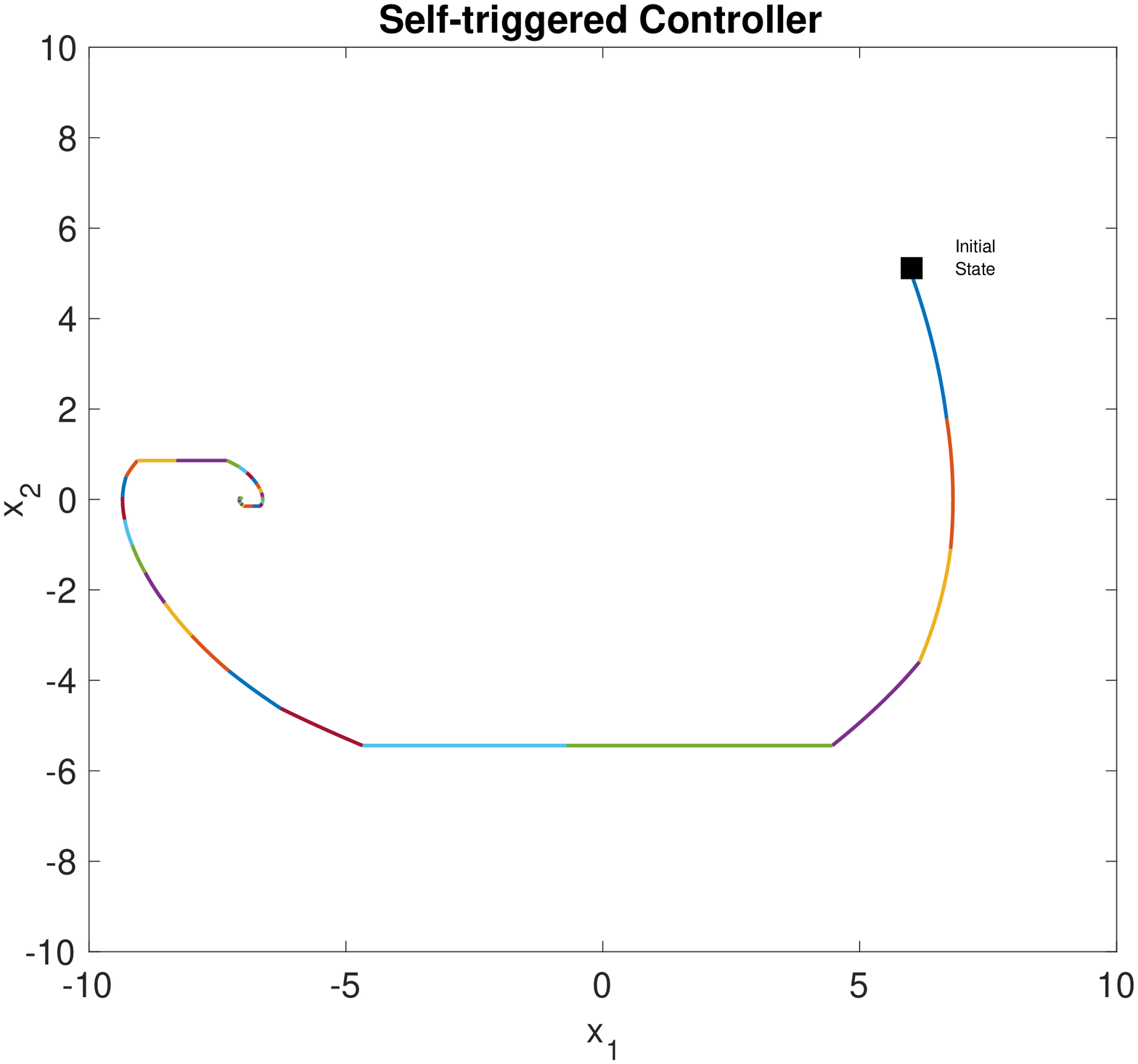}}
    \subfigure[]{
    \includegraphics[width=0.3\textwidth]{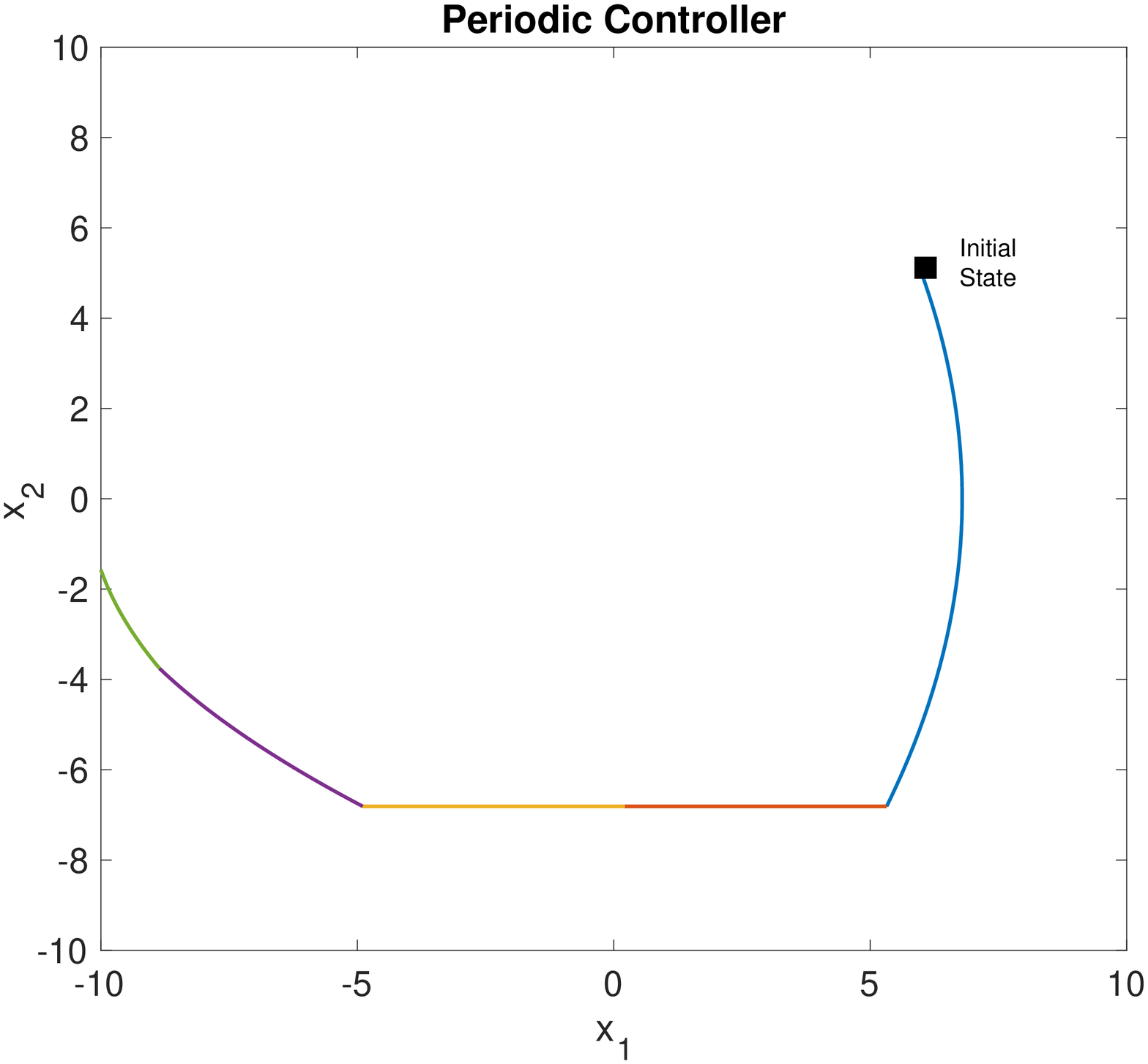}}
    \caption{System Trajectory Comparison}
    \label{fig:trajComparison}
\end{figure}

\subsection{Discussion}\label{Discussion}
The self-triggered controller is designed to obtain the safety property from CBFs while stabilizing a system asymptotically. In addition, we use the ECBF framework to ensure the controller works for system with high relative degrees. In this example, since we cannot directly control the state $x_1$, the use of ECBF becomes necessary. The nice relation that is shown in Remark \ref{remark1} between ZCBF and ECBF also reduces the complexity when designing this controller. The proposition \eqref{proposition2} guarantees that the controller can be applied in a real-world situation where the the update frequency cannot be infinitely fast and it is numerically validated in the experiment. Although the example study focuses on a simple double integrator system, we believe that the idea can be applied to nonlinear system in general. That being said, the calculation of lower bounds for CBFs and upper bounds for Lyapunov function might not be trivial. Moreover, like all model based controllers, the CLF-CBF controller relies on an accurate system model to work well. 

\section{Conclusion} \label{Conclusion}
In this paper, we proposed a self-triggered controller with CBF-CLF based QP formulation guarantee the safety of our system under ZOH mechanism. It is a starting point to bridge the gap between theoretical work and real-life implementation with the involvement of digital computers. This novel approach has been successfully validated on a double integrator dynamics. In addition, the theoretical contribution includes trajectory bound with system dynamics and proof of fixed update interval as the system approaches to equilibrium. For future work, we would like to look into problems with more complicated dynamics, such as quad-copters and manipulators. In addition, the effect of external disturbances will also be studied.

\begin{figure*}
\center
\includegraphics[width=0.97\textwidth]{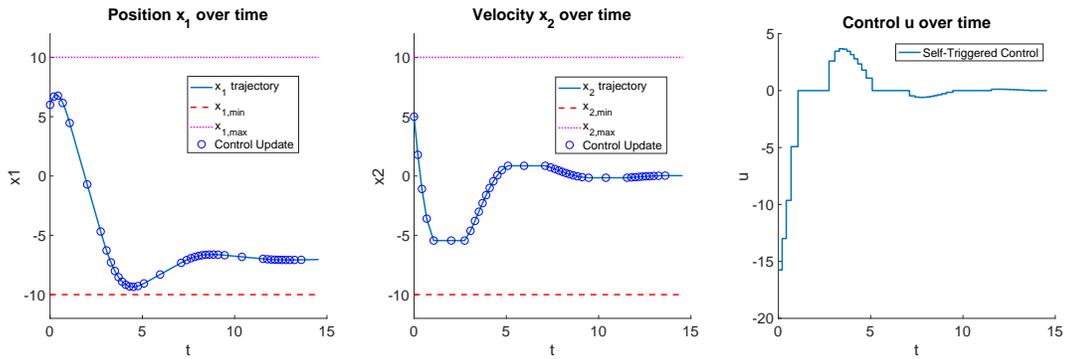} 
\caption{\small \sl Self-triggered control with variable time step}
\label{fig:selfTriggered}
\end{figure*}
\begin{figure*}
\center
\includegraphics[width=1\textwidth]{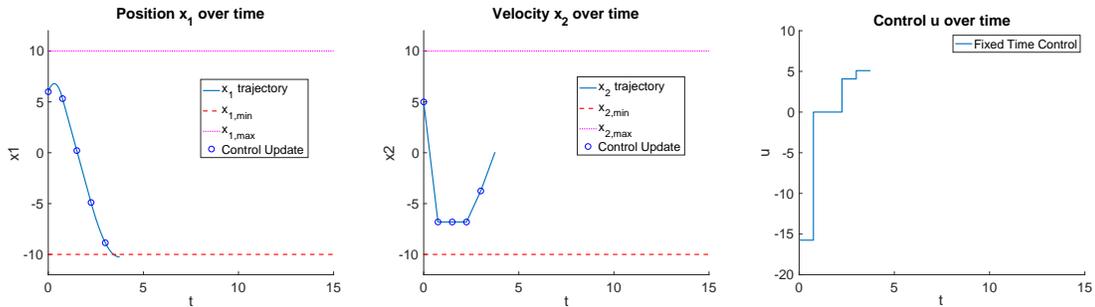}
\caption{\small \sl Periodic control with constant time step}
\label{fig:periodic}
\end{figure*}

\bibliographystyle{IEEEtran}
\bibliography{IEEEabrv,mybib}

\end{document}